\documentclass[letterpaper, 10 pt, conference]{ieeeconf}  

\IEEEoverridecommandlockouts                              
\usepackage{graphicx} 
\usepackage{subfig}
\usepackage{times} 
\usepackage{amsmath} 
\usepackage{amssymb}  
\usepackage{cite}
\usepackage{color}
\usepackage{url}

\newcommand{\KeyGen}{\mathsf{Gen}}
\newcommand{\Ecd}{\mathsf{Ecd}}
\newcommand{\Dcd}{\mathsf{Dcd}}
\newcommand{\Enc}{\mathsf{Enc}}
\newcommand{\Encv}{\mathsf{Enc}_v}
\newcommand{\EncG}{\mathsf{Enc}_G}
\newcommand{\Dec}{\mathsf{Dec}}

\newcommand{\sk}{\mathsf{sk}}
\newcommand{\Setup}{\mathsf{Setup}}

\newcommand{\BitDecomp}{G^{-1}}

\newcommand{\uniform}{\mathcal{U}(\mathbb{Z}_q)}
\newcommand{\discrete}{\mathcal{D}(0,\sigma)}
\newcommand{\UpdateKey}{\mathcal{T}_{\mathcal{K}}}
\newcommand{\Updatecv}{\mathcal{T}_{\mathcal{C}_v}}
\newcommand{\UpdateCG}{\mathcal{T}_{\mathcal{C}_G}}

\newtheorem{definition}{Definition}

\newtheorem{lemma}{Lemma}
\newtheorem{theorem}{Theorem}

\newtheorem{problem}{Problem}
\newtheorem{remark}{Remark}

\title{\LARGE \bf
Dynamic-Key Post-Quantum Encrypted Control  Against \\ System Identification Attacks
}

\author{Jungjin Park and Kiminao Kogiso
\thanks{This work was supported by JSPS KAKENHI Grant Numbers 22H01509 and 23K22779.}
\thanks{J. Park and K. Kogiso are with the Department of Mechanical and Intelligent Systems Engineering, Graduate School of
Informatics and Engineering, The University of Electro-Communications, Chofu, Tokyo,
Japan {\tt\small \{parkjungjin,kogiso\}@uec.ac.jp}}%
}
\begin{document}

\maketitle
\thispagestyle{empty}
\pagestyle{empty}

\thanks{This work has been submitted to the IEEE for possible publication.
Copyright may be transferred without notice, after which this version may no longer be accessible.}

\begin{abstract}
This study proposes post-quantum encrypted control systems based on dynamic-key Learning with Errors (LWE) encryption schemes. 
The proposed method develops update maps that simultaneously update the private key and ciphertexts within the LWE framework, enabling dynamic-key encrypted control resistant to system identification attacks. 
The growth of errors induced by homomorphic operations is analyzed, and sufficient parameter conditions guaranteeing correct decryption at each control step are clarified. 
Furthermore, a design procedure for the encrypted control systems is presented based on security metrics such as sample-identifying complexity and deciphering time. 
A numerical example demonstrates that the proposed control systems achieve secure control against the considered system identification attack.
\end{abstract} 
\section{Introduction}
Cyber-physical systems (CPSs) are considered core technologies for realizing Industry~5.0 and expected to support a wide range of applications, such as digital-twin-based power grid optimization and smart city infrastructure~\cite{Industry5.0survey22}.
However, CPSs are exposed to various cyber threats, including eavesdropping on control and measurements, signal tampering, and replay attacks~\cite{Teixerira2015,CPSsecuritiy2019}.
These vulnerabilities have motivated the development of secure control frameworks that protect both control signals and system information.

Encrypted control has attracted attention as an effective approach for mitigating these threats~\cite{Kogiso15,Junsoo2016,Darup2018}.
In encrypted control, homomorphic encryption is used to protect controller parameters and signals transmitted over communication networks while still allowing control computations to be performed on encrypted data.
However, the rapid progress of quantum computing raises concerns about the long-term security of conventional public-key cryptosystems, such as RSA and ElGamal~\cite{Gerjuoy04,Proos03}.
To address this issue, cryptosystems based on the Learning with Errors (LWE) problem have been proposed and are considered resistant to quantum attacks~\cite{Regev05,Peikert09}.
Several post-quantum cryptosystems (PQC), including the Regev encryption~\cite{Regev05} and the GSW~\cite{GSW}, possess homomorphic properties and have therefore been investigated for encrypted control applications~\cite{Junsoo23,Robert24}.
Nevertheless, even when PQC-based encryption schemes are employed, adversaries may still collect encrypted input-output data and attempt to infer system information.

Protecting the system dynamics is therefore crucial for ensuring the security of control systems.
If an adversary obtains accurate knowledge of the system dynamics, sophisticated attacks exploiting the system model can be designed.
In particular, stealthy attacks, such as covert attacks and zero-dynamics attacks, are known to destabilize control systems while remaining difficult to detect~\cite{Pasqualetti2013,Chong2019}.
To obtain such knowledge, adversaries may perform a system identification attack, estimating the system dynamics from collected input–output data.
To evaluate the security of encrypted control systems against such attacks, recent studies have introduced security notions such as sample-identifying complexity and deciphering time~\cite{Teranishi2023,Teranishi23least}.
The former represents the number of samples required for an adversary to identify the system with a given accuracy, while the latter represents the computation time required to decrypt the collected samples.
Based on these metrics, system designers can determine security parameters by considering the expected operational lifetime of the system and the allowable estimation accuracy.
However, when static-key encryption schemes are employed, adversaries can accumulate encrypted input–output data over time and attempt to recover system information from the collected samples.

One promising approach to enhancing security against system identification attacks is to use a dynamic-key encryption scheme, in which the private key and ciphertexts are updated at each control step.
In this framework, each transmitted sample is encrypted with a different key, making it significantly more difficult for adversaries to accumulate usable data for system identification.
Furthermore, dynamic-key encryption can mitigate attacks based on previously recorded data, such as signal falsification attacks and replay attacks~\cite{kosugi24}.
However, existing dynamic-key encrypted control methods are limited to the ElGamal encryption scheme~\cite{Teranishi20}, and dynamic-key encryption schemes based on PQC have not yet been investigated.

The objective of this study is to propose a secure encrypted state-feedback control system based on a dynamic-key LWE encryption scheme.
The key idea is to develop updated maps that simultaneously update both the private key and the ciphertexts within the LWE encryption framework.
Under the proposed scheme, we analyze the growth of errors induced by homomorphic operations and derive sufficient conditions on parameters that guarantee correct decryption at each control step.
Moreover, we present a design procedure for the secure encrypted control system based on sample-identifying complexity and deciphering time.
A numerical example demonstrates that the proposed controller achieves secure control against the considered system identification attack.

The contributions of this study are summarized as follows:
(i) We construct secure encrypted state-feedback control systems that prevent system identification attacks by incorporating dynamic-key encryption into the control loop;
(ii) We develop a dynamic-key LWE-based encryption scheme by introducing transition maps that jointly update the private key and ciphertexts, thereby extending dynamic-key encrypted control from the ElGamal scheme to a PQC framework;
(iii) We derive parameter conditions that suppress error growth caused by homomorphic operations and present a design procedure for secure encrypted control systems based on the sample-identifying complexity and the deciphering time.

The remainder of this paper is as follows.
Section~\ref{Sec:Pro} describes the problem formulation.
Section~\ref{Sec:DynLWE} develops the dynamic LWE-based encryption scheme.
Section~\ref{Sec:ecs} proposes the encrypted control systems.
Section~\ref{Sec:exp} provides a numerical example.
Finally, Section~\ref{Sec:con} concludes this paper.

\vspace*{1ex}
\noindent\textbf{Notation:} The sets of real numbers, rational numbers, integers, natural numbers, plaintext space, and ciphertext space are denoted by $\mathbb{R},\mathbb{Q},\mathbb{Z},\mathbb{N}, \mathcal{M},\mathcal{C}$, respectively.
The rounding and floor functions are denoted by $\left\lceil\cdot\right\rfloor$ and $\left\lfloor \cdot \right\rfloor$, respectively.
We define $\mathbb{Z}^+ := \{z \in \mathbb{Z} \; | \; 0 \leq z\}$, and $\mathbb{Z}_q := \{z \in \mathbb{Z} \; | \; -\tfrac{q}{2} < z \leq \tfrac{q}{2}\}$.
For a scalar $a \in \mathbb{R}$, its absolute value is denoted by $|a|$.
The set of vectors of size $n$ is denoted by $\mathbb{R}^n$. 
The $j$th element of a vector $v$ is denoted by $v_j$.
The infinity norm of $v$ is denoted by $\|v\|_\infty$.
The set of matrices of size $m \times n$ is denoted by $\mathbb{R}^{m \times n}$.
The $(i,j)$ entry of matrix $M$ is denoted by $M_{ij}$.
The Frobenius norm of $M\in\mathbb{R}^{m \times n}$ is denoted by $\|M\|_F:=\sqrt{\mathrm{tr}(M^\top M)}$.
For $r\in\mathbb{N}$, $I_r\in\mathbb{R}^{r\times r}$ denotes the identity matrix.
The uniform distribution over $\mathbb{Z}_q$ is denoted by $\uniform$, and the zero-mean discrete Gaussian distribution with standard deviation $\sigma>0$ is denoted by $\discrete$.
The gadget matrix $G$ is defined as
$G:=\begin{bmatrix} I_r & \nu I_r & \cdots & \nu^{d-1} I_r\end{bmatrix}$, where $\nu$ is the radix, and $d\in \mathbb{N}$ is chosen such that $\nu^{d-1}<q\le\nu^d$.
For an arbitrary vector $w=[w_0,\ldots,w_{r-1}]^{\top}\in\mathbb{Z}_q^{r}$, the digit decomposition map $\BitDecomp(\cdot)$ expands each entry of $w$ into its base-$\nu$ representation and is defined as $\BitDecomp(w):=[w_{0,0},\ldots, w_{0,d-1},\,w_{1,0}, \ldots,w_{r-1,d-1}]^{\top}\in\mathbb{Z}^{dr}$,
where $w_i=\sum_{j=0}^{d-1} w_{i,j}\nu^j$ with $w_{i,j}\in\{-\nu+1,\ldots,\nu-1\}$. 
With these definitions, the relation $w=G\,\BitDecomp(w) $ holds.

\section{Attack Scenario and Problem Formulation}\label{Sec:Pro}
This section describes the considered attack scenario and formulates the problem addressed in this study.

\subsection{Control System Setup}
We consider a discrete-time linear plant and a state-feedback control law as follows:
\begin{subequations}\label{eq:system}
\begin{align}
x(k+1) &=A_p x(k)+B_pu(k)+w(k), \\
u(k)   &=F x(k), 
\end{align}
\end{subequations}
where $k\in\mathbb{Z}^+$, $x\in\mathbb{R}^{n_p}$, $u\in\mathbb{R}$ and $w\in\mathbb{R}^{n_p}$ denote the step, state, control input, and process noise, respectively. 
Suppose $x(0)$ and $w(k)$ are are i.i.d. Gaussian random variables with mean $\boldsymbol{0}$ and covariance $\sigma_p^2 I_{n_p}$.
The pair $(A_p,B_p)$ is controllable, and $F\in\mathbb{R}^{n_p}$ is a feedback gain designed such that $A_p+B_pF$ is stable.

To convert real-valued signals into plaintext elements, we use the following encoder and decoder:
$\Ecd_{\Delta}: \mathbb{R}\ni x\mapsto\check{x}=\left\lceil\Delta x\right\rfloor\in\mathcal{M}$, and 
$\Dcd_{\Delta}: \mathcal{M}\ni\check{x}\mapsto \bar{x}=\frac{\check{x}}{\Delta}\in\mathbb{Q}$, respectively, where the quantization gain $\Delta > 0$ is chosen as a power of two. 
The quantization gain $\Delta$ should be selected such that the quantized values remain within the plaintext space, thereby preventing overflow or underflow.

\subsection{Attack Scenario}
We consider an attack scenario in which an adversary performs system identification using the least-squares method, as in~\cite{Teranishi23least}.
Suppose that the adversaries use a quantum computer to recover the plaintext input and state data by breaking the underlying PQC scheme.

The considered attack scenario is as follows: Against \eqref{eq:system} an adversary performs the following steps:
        1) The adversary eavesdrops the encrypted data generated by control system~\eqref{eq:system} from $k_s$ to $k_s+N$, where $k_s\in\mathbb{Z}^+$ and $N>0$ is the number of samples;
        2) The adversary decrypts the encrypted data to obtain the plaintext data $\{x(k_s),\cdots,x(k_s+N)\}$;
        3) The adversary estimates the dynamics $\mathsf{A}:=A_p+B_pF$ by the least squares method; $\hat{\mathsf{A}} = \mathrm{argmin}_{\mathsf{A} \in \mathbb{R}^{n_p \times n_p}} \left\| X_f - \mathsf{A} X_p \right\|_F^2 = X_f X_p^\dagger$, where $X_f = [x(k_s + 1) \ \cdots \ x(k_s + N )]$, $X_p = [x(k_s) \ \cdots \ x(k_s + N-1)]$, and $X_p^\dagger$ is the pseudoinverse of $X_p$. 

In addition, an estimation error is defined as
$\epsilon(N,F)=\tfrac{1}{n_p^2} \left\| \mathsf{A}-\hat{\mathsf{A}}\right\|_F^2$.
From~\cite{Teranishi23least}, the expected estimation error $\mathbb{E}[\epsilon(N,F)]$ can be characterized using the sample-identifying complexity of~\eqref{eq:system}. 
In particular, the minimum number of samples required to achieve an acceptable estimation accuracy can be derived from this complexity measure.
Suppose that the defender who designs the encrypted control system specifies an acceptable estimation accuracy $\gamma_c>0$. 
Then, the minimum sample size required for the adversary to achieve $\gamma_c$ is given by
$N^*(\gamma_c) = \left\lfloor \tfrac{n_p}{\gamma_c \, \mathrm{tr}(\Psi)} \right\rfloor + 2$, where $\Psi$ is the solution to the discrete Lyapunov equation $\mathsf{A}\Psi \mathsf{A}^\top - \Psi + I = 0$.
For simplicity, we assume that the quantization error $\bar{x}-x$ is negligible because of the appropriate choice of a quantization gain, and the sample size $N^*$ is publicly known.
This represents a worst-case scenario for the defender, in which identifying the closed-loop dynamics requires recovering $N^*$ plaintext samples.

\subsection{Security Concept of Encrypted Control Systems} %
To discuss the security of encrypted control systems, we focus on the computational effort required for an adversary to break encrypted data.
The security level of an encryption scheme, quantified in terms of bit security~\cite {Cryptography05}, is introduced.
\begin{definition}
    An encryption scheme satisfies $\lambda$-bit security if at least $2^{\lambda}$ operations are required to break the scheme.
\end{definition}
The security of encrypted control systems is defined by the computational time required to break ciphertexts used in system identification. 
This computation time is referred to as the sample-deciphering time in~\cite{Teranishi23least}, where quantum computers are not taken into account.
We define a quantum sample-deciphering time $T_Q$ using the quantum bit-security $\lambda_Q$ and processing capability of a quantum computer in Circuit Layer Operations Per Second (CLOPS)~\cite{CLOPS25}\footnote{The performance of currently available quantum computers is $3.40\times 10^{5}$ CLOPS, which is much slower than that of state-of-the-art classical computers, $1.81\times 10^{18}$ FLOPS. 
However, in this study, we adopt a conservative threat model by assuming an adversary equipped with a quantum computer equivalent in performance to current high-performance classical computers.}.
Let the adversary be equipped with a quantum computer capable of $\Xi$~CLOPS. 
The quantum sample-deciphering time is
\begin{align}\label{def:qunatum_decipher}
    T_Q(N, \lambda_Q) = \frac{N  2^{\lambda_Q}}{\Xi},
\end{align}
where $\lambda_Q$ is quantum bit security.
The quantum sample-deciphering time $T_Q$ represents the computational time required for the adversary to decrypt $N$ ciphertexts generated by a dynamic-key homomorphic encryption scheme that ensures $\lambda_Q$-bit quantum security for system identification.  
We use $T_Q$ to define the security of encrypted control systems as follows.

\begin{definition}\label{def:secure}
Let $T_c>0$ denote the operational lifetime of the control system~\eqref{eq:system}.
The control system designer (defender) determines the acceptable estimation accuracy $\gamma_c$ and the minimum sample size $N^*(\gamma_c)$.
If $T_Q(N^*(\gamma_c),\lambda_Q)>T_c$, then the encrypted control system is said to be secure under the attack model.
\end{definition}

\subsection{Problem Formulation}
In the considered attack scenario, the defender should design the encrypted control system so that the adversary cannot obtain sufficient data to perform system identification.
Accordingly, the design problem of secure encrypted control systems reduces to selecting security parameters and a key-update strategy such that $T_Q(N^*(\gamma_c),\lambda_Q) > T_c$, which can be formulated as follows.
\begin{problem}
Given the control system described in~\eqref{eq:system},
design an encrypted state-feedback control system that is secure in the sense of \textit{Definition~\ref{def:secure}}.
\end{problem}
We attempt to solve the problem by appropriately determining the update mechanism and by clarifying the conditions under which the encrypted and unencrypted control systems coincide.

\section{Dynamic-Key PQC}\label{Sec:DynLWE}
This section develops a novel dynamic-key LWE encryption scheme and confirms its properties, including correctness and full homomorphism.

\subsection{Dynamic-key LWE-based Encryption Scheme}
The key-updatable LWE-based encryption scheme is presented as follows.
A dynamic-key LWE-based  encryption scheme $\Pi_{\mathrm{dyn}}$ at time step $k$ is defined as the tuple,
\begin{align*}
\Pi_{\mathrm{dyn}}
:= (\Setup, \KeyGen, \Enc_v, \Enc_G, \Dec, \UpdateKey, \Updatecv, \UpdateCG),
\end{align*}
where $(\Setup, \KeyGen, \Enc_v, \Enc_G, \Dec)$ are the algorithms of a conventional LWE-based  encryption scheme~\cite{Junsoo20},
and the remaining tuple ($\UpdateKey$, $\Updatecv$, $\UpdateCG$) are update transition maps.
The algorithms are described as follows.
 \begin{itemize}
    \item $\Setup(1^\lambda)$: Choose a dimension $n\in\mathbb{N}$,
    modulus $t=2^{t_0}$, $t_0\in\mathbb{N}$, modulus $q=2^{q_0}$, $q_0\in\mathbb{N}$, such that $t\ll q$, and a standard deviation $\sigma>0$.
    Choose $d\in\mathbb{N}$ and define $\nu:=2^{\nu_0}$ such that $\nu^{d-1}<q\le \nu^d$, and set $n_G:=d(n+1)$. Define the plaintext space $\mathcal{M}:=\mathbb{Z}_t$, and the ciphertext spaces $\mathcal{C}_v:=\mathbb{Z}_q^{n+1}$, $\mathcal{C}_G:=\mathbb{Z}_q^{(n+1)\times n_G}$. 
    Return $\mathsf{p}=(n,t,q,\sigma,\nu,d,n_G)$.

    \item $\KeyGen(\mathsf{p})$: Generate the private key $\sk\in\mathbb{Z}_q^n$ as a column vector, with each element sampled from $\uniform$.
    Return $\sk$ and $\tau = \begin{bmatrix}1\\ \sk\end{bmatrix}$.

    \item $\Encv(m\in\mathbb{Z}_t,\sk)$: Given input $m\in\mathbb{Z}_t$ and $\sk$, return the ciphertext $c\in\mathbb{Z}_q^{n+1}$, 
    where $a\in\mathbb{Z}_q^n$ is a random column vector and $e\in\mathbb{Z}$ is an error sampled from $\discrete$,  $b=-\sk^{\top}a+\tfrac{q}{t}m+e \bmod q$, and $c=\begin{bmatrix}b\\ a\end{bmatrix}$.
    For simplicity, the ciphertext of a vector $X$ encrypted component-wise by $\Encv$ is denoted by $c_X$.
    
    \item $\EncG(m\in\mathbb{Z}_t,\sk)$: Given input $m\in\mathbb{Z}_t$ and $\sk$, return the ciphertext $C\in\mathbb{Z}_q^{(n+1)\times n_G}$,
    where $A\in\mathbb{Z}_q^{n\times n_G}$ is a random matrix, and $E\in\mathbb{Z}_q^{n_G}$ is an error row vector, with each element sampled from $\discrete$, $B=-\sk^{\top}A+E \bmod q$, and $C=mG+\begin{bmatrix}B\\ A\end{bmatrix}\bmod q$.
    For simplicity, the ciphertext of a vector $X$ encrypted component-wise by $\EncG$ is denoted by $C_X$.

    \item $\Dec(c\in\mathbb{Z}_q^{n+1},\sk)$: Given ciphertext $c$ and private key $\sk$, compute $\left\lceil\tfrac{t}{q}\tau^\top c\right\rfloor = \left\lceil m + \tfrac{t}{q}e \right\rfloor $. If $\tfrac{t}{q}|e|<\tfrac{1}{2}$, then $\Dec(c,\sk)=m$.

    \item $\UpdateKey$: The private key update map is defined as
    \begin{align*}
      &\UpdateKey:\ (\sk(k), \tau(k)) \mapsto (\sk(k+1), \tau(k+1)) \\
      &\quad = \left(\sk(k)+s(k)\bmod q, \begin{bmatrix}1\\ \sk(k)+s(k)\end{bmatrix}\bmod q\right),
    \end{align*}
    where $s\in\mathbb{Z}_q^n$ is a random column vector with each element sampled from $\uniform$.

    \item $\Updatecv$: The ciphertext update map for $c$ is defined as
    \begin{align*}
      \Updatecv:\ c(k) \mapsto c(k+1)
      =\begin{bmatrix}
        b-s(k)^{\top}a\\
        a
      \end{bmatrix}\bmod q,
    \end{align*}
    where $c(k)$ is the ciphertext encrypted by $\Encv$ and $\sk(k)$, and $s(k)\in\mathbb{Z}_q^n$ is a random column vector with each element sampled from $\uniform$ at step $k$.

    \item $\UpdateCG$: The ciphertext update map for $C$ is defined as
    \begin{align*}
      \UpdateCG: C(k)\mapsto &C(k+1)  \\
      &  =mG+\begin{bmatrix}
        B-s(k)^{\top}A\\
        A
      \end{bmatrix}\bmod q,
    \end{align*}
    where $C(k)$ is the ciphertext encrypted by $\EncG$ and $\sk(k)$, and $s(k)\in\mathbb{Z}_q^n$ is a random column vector with each element sampled from $\uniform$ at step $k$.
\end{itemize}

\begin{remark}
In contrast to~\cite{Junsoo20}, the developed framework introduces update transition maps, namely $\UpdateKey$, $\Updatecv$, and $\UpdateCG$, which enable dynamic-key operation in encrypted control systems. 
These update mechanisms allow the encryption keys and related cryptographic states to be updated during control operations, thereby enhancing the security of the control systems against long-term cryptanalytic or data-driven attacks.
\end{remark}

In LWE-based cryptosystems, the applicable attack methods and the resulting quantum-bit security depend on the parameters $(n,q,\sigma)$ and the number of ciphertexts generated under a given private key, denoted by $\mathsf{m}$. 
This dependence makes it difficult to assess the security based on a single attack model. 
Therefore, we evaluate the quantum-bit security $\lambda_Q$ using the Lattice Estimator (LE)~\cite{LWE_estimator}, which computes the bit security levels for all applicable attacks. 
For example, applying the LE to the parameters $(\mathsf{m},n,q,\sigma) = (239,239,2^{48},1.0)$ used in~\cite{Junsoo23} yields a minimum quantum-bit security of $2^{41.1}$, and thus $\lambda_Q = 41$ bits.

\begin{remark}
In the encrypted control system based on the proposed dynamic-key LWE encryption scheme, the number of ciphertexts accessible to an adversary under a single private key is inherently limited due to the key-update mechanism.
This implies that the parameter $\mathsf{m}$ is effectively bounded in practical operations.
Therefore, security evaluation and parameter selection can be further specialized to encrypted control systems by accounting for this structural restriction on $\mathsf{m}$.
A systematic design of security parameters is left for future work.
\end{remark}

\subsection{Properties of Dynamic-key LWE-based Encryption} 
We clarify that the developed dynamic-key LWE-based encryption scheme satisfies correctness and fully homomorphic properties at each step.

\begin{lemma}\label{prop:correctness}
Let $\sk(k)$ denote the private key at step $k$ and $c(k) = \Encv(m,\sk(k))$.
For $m\in\mathbb{Z}_t$, if the parameters $t$ and $q$ satisfy $|\tfrac{t}{q} e(k)| < \tfrac{1}{2}$, then $ \Dec(c(k),\sk(k)) = m$, where $e(k)$ is the error generated by $\Encv$.
\end{lemma}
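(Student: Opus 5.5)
The argument is a direct computation of the decryption inner product, and I will organize it by cases on how $c(k)$ was produced. The statement says $c(k)=\Encv(m,\sk(k))$. Operationally, though, $c(k)$ may be a fresh encryption under $\sk(k)$, or it may come from a fresh encryption under $\sk(k_0)$ followed by repeated applications of $\Updatecv$. So I will first handle the fresh case and then show by induction on $k$ that updated ciphertexts have the same form.

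\emph{Fresh encryption.} Write $c(k)=[b;\,a]$ with $b=-\sk(k)^\top a+\tfrac{q}{t}m+e(k) \bmod q$. Then
\begin{align*}
\tau(k)^\top c(k) = b+\sk(k)^\top a \equiv \tfrac{q}{t}m+e(k) \pmod q .
\end{align*}
Multiply by $\tfrac{t}{q}$, which is an integer power of two because $t=2^{t_0}$ and $q=2^{q_0}$. A congruence modulo $q$ then becomes a congruence modulo $t$, so
\begin{align*}
\tfrac{t}{q}\tau(k)^\top c(k) \equiv m+\tfrac{t}{q}e(k) \pmod t .
\end{align*}
By hypothesis $|\tfrac{t}{q}e(k)|<\tfrac12$, so rounding returns $m$ exactly, and the result is understood in $\mathbb{Z}_t$ via the centered representative. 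This gives $\Dec(c(k),\sk(k))=m$.

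\emph{Updated ciphertexts.} Suppose $c(k)=[b;\,a]$ satisfies $b+\sk(k)^\top a\equiv \tfrac{q}{t}m+e(k)$. Then $\Updatecv$ gives $c(k+1)=[b-s(k)^\top a;\,a]$, and $\UpdateKey$ gives $\sk(k+1)=\sk(k)+s(k)$. Therefore
\begin{align*}
b-s(k)^\top a+(\sk(k)+s(k))^\top a \equiv b+\sk(k)^\top a \pmod q .
\end{align*}
So the invariant holds at $k+1$ with $e(k+1)=e(k)$. The update therefore adds no noise, and the fresh-case argument applies at every step. Both vectors are reduced mod $q$, but this causes no trouble, since only the inner product modulo $q$ matters.

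\emph{Main obstacle.} The one delicate point is the mod-$q$ to mod-$t$ scaling. I must check that $\tfrac{t}{q}(z+\ell q)=\tfrac{t}{q}z+\ell t$ with $\ell t\in t\mathbb{Z}$, so the rounding is well defined modulo $t$. This relies on $q/t$ being an integer, which the power-of-two choice of $t$ and $q$ guarantees. Everything else is bookkeeping.
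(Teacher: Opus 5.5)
Your proof is correct and follows the paper's argument: expand $\tau(k)^\top c(k)=b+\sk(k)^\top a$ to obtain $\tfrac{q}{t}m+e(k)$, scale by $\tfrac{t}{q}$, and round using $|\tfrac{t}{q}e(k)|<\tfrac12$, with your explicit mod-$q$ bookkeeping being a refinement the paper omits (one correction: pulling the shift $\ell t$ out of the rounding needs only $t\in\mathbb{Z}$, whereas $q/t\in\mathbb{Z}$ is what makes $\tfrac{q}{t}m$ an integer in $\Encv$). The $\Updatecv$ induction goes beyond the statement, which concerns a fresh encryption under $\sk(k)$, but it is correct and shows that the update adds no noise.
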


\begin{proof}
Let $c(k) = \begin{bmatrix} b \\ a \end{bmatrix} \in \mathbb{Z}_q^{n+1}$ with
$b = -\sk(k)^\top a + \frac{q}{t} m + e(k) \bmod q$.
Then, $\Dec(c(k), \sk(k))=\lceil \frac{t}{q} \tau(k)^\top c(k) \rfloor =\lceil \frac{t}{q} \bigl( b + \sk(k)^\top a \bigr) \rfloor =\lceil m + \frac{t}{q} e(k) \rfloor$.
Since $|\tfrac{t}{q}e(k)| < \tfrac{1}{2}$, the decryption yields $m$.
\end{proof}

Next, the homomorphic operations are defined as follows.
\begin{definition}
For ciphertexts $c_1, c_2 \in \mathbb{Z}_q^{n+1}$, define the homomorphic addition as $c_1 \oplus c_2 := c_1 + c_2$.
Moreover, for ciphertext $C \in \mathbb{Z}_q^{(n+1)\times n_G}$ and $c \in \mathbb{Z}_q^{n+1}$, define the homomorphic multiplication as $C \otimes c := C \BitDecomp(c)$.
\end{definition}

The following lemma shows that the homomorphic addition and multiplication are correctly supported under the dynamic-key LWE-based encryption.

\begin{lemma}\label{prop:dyn_add}
Let $\sk(k)$ be the private key at step $k$, and let $c_1(k) = \Encv(m_1,\sk(k))$ and $c_2(k) = \Encv(m_2,\sk(k))$ be ciphertexts with errors $e_1(k)$ and $e_2(k)$ generated by $\Encv$, respectively.  
For $m_1, m_2 \in \mathbb{Z}_t$, if $t$ and $q$ satisfy $|\tfrac{t}{q} e_{\mathsf{add}}(k)| < \tfrac{1}{2}$, then $\Dec(c_1(k) \oplus c_2(k), \sk(k)) = m_1 + m_2$, where $e_{\mathsf{add}}(k) := e_1(k) + e_2(k)$.
\end{lemma}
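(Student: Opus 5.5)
The plan is to mirror the proof of Lemma~\ref{prop:correctness}, using that decryption is a linear map of the ciphertext followed by rounding. Write $c_i(k) = \begin{bmatrix} b_i \\ a_i \end{bmatrix}$ with $b_i = -\sk(k)^\top a_i + \tfrac{q}{t} m_i + e_i(k) \bmod q$ for $i=1,2$. By the definition of $\oplus$, the sum is
\begin{align*}
c_1(k)\oplus c_2(k)=\begin{bmatrix} b_1+b_2 \\ a_1+a_2\end{bmatrix}.
\end{align*}

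The first step is to compute the inner product with $\tau(k) = \begin{bmatrix} 1 \\ \sk(k)\end{bmatrix}$. By linearity,
\begin{align*}
\tau(k)^\top (c_1(k)+c_2(k)) \equiv \tfrac{q}{t}(m_1+m_2) + e_1(k)+e_2(k) \pmod q.
\end{align*}
The $\sk(k)^\top a_i$ terms cancel exactly as in the single-ciphertext case. The right-hand side is $\tfrac{q}{t}(m_1+m_2)+e_{\mathsf{add}}(k)$.

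The second step is to multiply by $\tfrac{t}{q}$ and round, which gives $\lceil m_1+m_2+\tfrac{t}{q}e_{\mathsf{add}}(k)\rfloor$. Under the hypothesis $|\tfrac{t}{q}e_{\mathsf{add}}(k)|<\tfrac12$, this equals $m_1+m_2$, in the same way Lemma~\ref{prop:correctness} concludes.

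The only delicate point is the modular reduction. Since $t \mid q$, namely $q/t = 2^{q_0-t_0}$, a reduction by $q$ before scaling becomes a reduction by $t$ after scaling. Therefore $m_1+m_2$ is recovered as an element of $\mathbb{Z}_t$, that is, modulo $t$. The statement's ``$m_1+m_2$'' should be read in this sense, or as the literal integer sum when $m_1+m_2$ already lies in $\mathbb{Z}_t$, which is the regime the encoder is designed for. I will note this explicitly. I do not expect any genuine obstacle beyond this bookkeeping.
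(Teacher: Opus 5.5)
Your proposal is correct and follows the paper's proof essentially verbatim. Linearity of $\tau(k)^\top(\cdot)$ cancels the $\sk(k)^\top a_i$ terms and leaves $\lceil m_1+m_2+\tfrac{t}{q}e_{\mathsf{add}}(k)\rfloor$, which the hypothesis rounds to $m_1+m_2$. Your added remark that, since $t\mid q$, the recovered value is $m_1+m_2$ modulo $t$ (the literal integer sum only when it stays in $\mathbb{Z}_t$) is a valid refinement that the paper leaves implicit.
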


\begin{proof}  
By \textit{Lemma~\ref{prop:correctness}}, the decryption of their homomorphic sum is given by
$\Dec(c_1(k) \oplus c_2(k), \sk(k)) =\lceil \frac{t}{q} \, \tau(k)^\top \bigl(c_1(k) + c_2(k)\bigr) \rfloor = \lceil m_1 + m_2 + \frac{t}{q} e_{\mathsf{add}}(k) \rfloor$,
where $e_{\mathsf{add}}(k) := e_1(k) + e_2(k)$.
Since $|\tfrac{t}{q} e_{\mathsf{add}}(k)| < \tfrac{1}{2}$, thus$ \lceil m_1 + m_2 + \frac{t}{q} e_{\mathsf{add}}(k) \rfloor = m_1 + m_2$.
\end{proof}

\begin{lemma}\label{prop:dyn_mult}
Let $\sk(k)$ be the private key at step $k$, $C(k) = \EncG(m_1, \sk(k))$, and $c(k) = \Encv(m_2, \sk(k))$, with errors $E(k)$ and $e(k)$ generated by $\EncG$ and $\Encv$, respectively.  
For $m_1, m_2 \in \mathbb{Z}_t$, if $t$ and $q$ satisfy $|\tfrac{t}{q} e_{\mathsf{mult}}(k)| < \tfrac{1}{2}$, then $\Dec(C(k) \otimes c(k), \sk(k)) = m_1 m_2$, where $e_{\mathsf{mult}}(k) := m_1 e (k) + E(k) \BitDecomp(c(k))$.
\end{lemma}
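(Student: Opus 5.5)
The plan is to compute $\tau(k)^\top (C(k)\otimes c(k))$ directly and show that it equals $\tfrac{q}{t}m_1m_2 + e_{\mathsf{mult}}(k) \bmod q$. Correctness then follows exactly as in \textit{Lemma~\ref{prop:correctness}}.

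First I would evaluate $\tau(k)^\top C(k)$. Write $C(k)=m_1G+\begin{bmatrix}B\\ A\end{bmatrix}$ with $B=-\sk(k)^\top A+E(k) \bmod q$. Then
\begin{align*}
\tau(k)^\top C(k) = m_1\tau(k)^\top G + B + \sk(k)^\top A = m_1\tau(k)^\top G + E(k) \bmod q .
\end{align*}
This is the standard GSW decryption identity. It holds for the key at the current step $k$, which is all that is needed here, because $C(k)$ and $c(k)$ share $\sk(k)$.

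Next I would multiply on the right by $\BitDecomp(c(k))$ and use the gadget identity $G\,\BitDecomp(c(k))=c(k)$ from the Notation section. This gives
\begin{align*}
\tau(k)^\top C(k)\BitDecomp(c(k)) = m_1\tau(k)^\top c(k) + E(k)\BitDecomp(c(k)) \bmod q.
\end{align*}
By the computation in the proof of \textit{Lemma~\ref{prop:correctness}}, $\tau(k)^\top c(k)=\tfrac{q}{t}m_2+e(k) \bmod q$. Substituting, the total is
\begin{align*}
\tfrac{q}{t}m_1m_2 + m_1e(k)+E(k)\BitDecomp(c(k)) = \tfrac{q}{t}m_1m_2 + e_{\mathsf{mult}}(k) \bmod q.
\end{align*}
Multiplying by $\tfrac{t}{q}$ and rounding, the hypothesis $|\tfrac{t}{q}e_{\mathsf{mult}}(k)|<\tfrac12$ yields $m_1m_2$. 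Since $q/t$ is an integer (both are powers of two), the reduction modulo $q$ becomes a reduction modulo $t$ after scaling. The result is therefore $m_1m_2$ interpreted in $\mathbb{Z}_t$.

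The only delicate point is the interaction of the modulus with the multiplications. The entries of $\BitDecomp(c(k))$ are integers, so multiplying a congruence mod $q$ by them preserves it. The term $m_1\cdot\tfrac{q}{t}m_2$ may exceed $q/2$ in magnitude, and this only affects the result modulo $t$, consistent with the plaintext space $\mathbb{Z}_t$. I would note this explicitly and otherwise treat it as routine.
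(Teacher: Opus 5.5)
Your proposal is correct and follows essentially the same route as the paper: expand $\tau(k)^\top C(k)$ to $m_1\tau(k)^\top G + E(k)$, apply the gadget identity $G\,\BitDecomp(c(k))=c(k)$, substitute $\tau(k)^\top c(k)=\tfrac{q}{t}m_2+e(k) \bmod q$ from \textit{Lemma~\ref{prop:correctness}}, and round under the hypothesis on $e_{\mathsf{mult}}(k)$. Your explicit treatment of the modular reduction is slightly more careful than the paper's proof. Because $q/t$ is an integer, the wrap-around modulo $q$ becomes wrap-around modulo $t$, so the decrypted value is $m_1m_2$ read in $\mathbb{Z}_t$; the paper leaves this implicit and simply assumes $m_1m_2$ does not overflow.
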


\begin{proof}
Let $C(k)$ be a ciphertext encrypted by $\EncG(m_1, \sk(k))$, and $c(k)$ by $\Encv(m_2, \sk(k))$, with corresponding errors $E(k)$ and $e(k)$.  
By $\EncG$,
\begin{align*}
  C(k) = m_1 G + \begin{bmatrix} -\sk(k)^\top A + E(k)\\ A \end{bmatrix} \bmod q.    
\end{align*}
Then, the decryption of the homomorphic product yields 
$\Dec(C(k) \otimes c(k), \sk(k)) = \lceil \frac{t}{q} \, \tau(k)^\top C(k) \BitDecomp(c(k))\rfloor=\lceil \tfrac{t}{q}m_1\tau(k)^\top c(k) + \tfrac{t}{q} E(k)\BitDecomp(c(k)) \rfloor$, since $G\BitDecomp(c(k)) = c(k)$.
Moreover, by \textit{Lemma~\ref{prop:correctness}}, it follows that $ \lceil \tfrac{t}{q}m_1 \tau(k)^\top c(k) + \tfrac{t}{q} E(k) \BitDecomp(c(k)) \rfloor = \lceil m_1 m_2 + \tfrac{t}{q} \bigl( m_1 e(k) + E(k) \BitDecomp(c(k)) \bigr) \rfloor = \lceil m_1m_2+\tfrac{t}{q}e_{\mathsf{mult}}\rfloor$.
If $|\tfrac{t}{q} e_{\mathsf{mult}}(k)| < \tfrac{1}{2}$, then $\lceil m_1m_2 + \tfrac{t}{q} e_{\mathsf{mult}}(k) \rfloor = m_1m_2$.
\end{proof}

\section{Design of Dynamic-Key Encrypted State-Feedback Control Systems}\label{Sec:ecs}
This section proposes encrypted state-feedback control systems using the proposed dynamic-key LWE-based encryption scheme and analyzes the errors introduced by homomorphic operations. 
This analysis presents parameter conditions and a design procedure for secure encrypted control systems.

\subsection{Encrypted State-Feedback Controller} 
Following the LWE-based encryption scheme framework introduced in \cite{Junsoo20}, we construct encrypted state-feedback control systems based on a dynamic-key LWE-based encryption scheme as follows.
 Let $C_F(k)$, $c_x(k)$, and $c_u(k)$ denote the ciphertexts of the gain $F$, the state $x(k)$, and the control input $u(k)$ at time step $k \in \mathbb{Z}^+$, respectively.
 Consider the control system~\eqref{eq:system} with the controller~$f$.
 Then, an encrypted controller, $f_{\Pi_{\mathsf{dyn}}}:\ (C_F(k),c_x(k))
 \mapsto c_u(k)$, is given by
 \begin{align}
  \sk(k+1)&=\UpdateKey(\sk(k)),\quad  C_F(k+1)=\UpdateCG(C_F(k)),\notag\\
 c_u(k)&=\bigoplus_{j=1}^{n_p}C_{F_{j}}(k)\otimes c_{x_j}(k)\notag,
 \end{align}
 where $\mathsf{p}=\Setup(1^{\lambda})$, $\sk(0)=\KeyGen(\mathsf{p})$, $C_F(0)=\EncG(\Ecd_{\Delta}(F),\sk(0))$, $c_x(k)=\Encv(\Ecd_{\Delta}(x(k)),\sk(k))$, 
 and for a vector, $\Encv$, $\EncG$, $\Dec$, $\Ecd_{\Delta}$, and $\Dcd_{\Delta}$ perform element-wise operations.
 The encrypted controller performs computations on the encrypted data $F$, $x(k)$, and $u(k)$ using a dynamic private key $\sk(k)$.

The above construction follows the proposed dynamic-key LWE-based encryption scheme and enables encrypted state-feedback control under a time-varying private key.
For practical implementation, the ciphertext update procedure $\UpdateCG$ requires secure access to both the random matrix $A$ and vector $s(k)$.

\subsection{Error Analysis and Design Procedure}
This section analyzes the error introduced by homomorphic operations in the encrypted control and derives parameter conditions under which the encrypted control system coincides with the quantized (or original) control system. 
The design procedure is then described based on the analysis.

Correct decryption requires that the total error $e_{\mathsf{total}}$ satisfies the condition $\tfrac{t}{q}|e_{\mathsf{total}}| < \tfrac{1}{2}$.
If this condition is violated, the decrypted control input may differ from the plaintext value.
Therefore, the parameters $t$ and $q$ must be designed to accommodate the total error introduced by homomorphic operations and ensure correct decryption.

To analyze the worst-case growth of the error, we introduce a bound on the error magnitude.
Let $\kappa \in \mathbb{N}$ be a sufficiently large constant, and neglect the tail probability that $|e| > \kappa\sigma$ for each error $e$ sampled from the discrete Gaussian distribution $\discrete$.
Thus, we impose the bound $|e| \leq \kappa\sigma$.
Applying these error bounds to the proposed encrypted controller, the following theorem provides a sufficient condition for correct decryption.
\begin{lemma}\label{thm:designt1}
If, for some $\kappa$, the parameters $\Delta$, $t$, and $q$ satisfy
\begin{align}\label{ieq:design_t1}
    n_p \bigl(\|\Ecd_{\Delta}(F)\|_\infty + n_{G}\nu \bigr)\kappa \sigma < \frac{q}{2t}, 
\end{align}
then encrypted control systems eliminate the accumulated errors introduced by homomorphic operations at each step.
\end{lemma}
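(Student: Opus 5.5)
We bound the total error in $c_u(k)$ and then apply the decryption criterion of Lemma~\ref{prop:correctness}. The first step is to check that the dynamic-key mechanism keeps the ciphertexts in the correct form. We verify by induction on $k$ that $C_F(k)$, obtained from $C_F(0)$ by repeated application of $\UpdateCG$, is a valid $\EncG$-type ciphertext of $\Ecd_\Delta(F)$ under $\sk(k)$, and that its error row vector $E(k)$ is unchanged.

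The induction step is a direct computation. We have
\begin{align*}
\tau(k+1)^\top C(k+1) &= m\,\tau(k+1)^\top G + B - s(k)^\top A + (\sk(k)+s(k))^\top A\\
&= m\,\tau(k+1)^\top G + E(k).
\end{align*}
The $s(k)$ terms cancel, so $E(k+1)=E(k)=E(0)$ and every entry obeys $|E_i(k)|\le\kappa\sigma$. The same computation applies to $\Updatecv$. In any case, $c_x(k)$ is freshly encrypted under $\sk(k)$, so its errors $e_{x_j}(k)$ are fresh samples with $|e_{x_j}(k)|\le\kappa\sigma$.

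Next we expand the output ciphertext. Since
\[
c_u(k)=\bigoplus_{j=1}^{n_p} C_{F_j}(k)\otimes c_{x_j}(k),
\]
we apply Lemma~\ref{prop:dyn_mult} to each summand and Lemma~\ref{prop:dyn_add}, extended to $n_p$ terms by induction, to the sum. This gives
\[
\tau(k)^\top c_u(k) = \tfrac{q}{t}\sum_j \check F_j\check x_j(k) + e_{\mathsf{total}}(k),
\]
with
\[
e_{\mathsf{total}}(k)=\sum_{j=1}^{n_p}\bigl(\check F_j e_{x_j}(k) + E_{F_j}(k)\BitDecomp(c_{x_j}(k))\bigr),
\]
where $\check F=\Ecd_\Delta(F)$.

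We then bound each summand. The first term satisfies $|\check F_j e_{x_j}|\le \|\Ecd_\Delta(F)\|_\infty\kappa\sigma$. For the second term, $\BitDecomp(c_{x_j})\in\mathbb{Z}^{n_G}$ has entries bounded in magnitude by $\nu-1<\nu$, and $E_{F_j}$ has $n_G$ entries each bounded by $\kappa\sigma$. Hence $|E_{F_j}\BitDecomp(c_{x_j})|\le n_G\nu\kappa\sigma$. Summing over $j$ gives
\[
|e_{\mathsf{total}}(k)|\le n_p(\|\Ecd_\Delta(F)\|_\infty+n_G\nu)\kappa\sigma .
\]
Condition~\eqref{ieq:design_t1} therefore yields $\tfrac{t}{q}|e_{\mathsf{total}}(k)|<\tfrac12$, so rounding returns exactly $\sum_j\check F_j\check x_j(k)$ at every step $k$. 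In other words, the homomorphic errors are eliminated at each step. Correctness modulo $t$ additionally relies on $\Delta$ being chosen so that no overflow occurs, as the paper already assumes.

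The main obstacle is the first step: showing that the key and ciphertext updates do not cause errors to accumulate over $k$. Without this, $E(k)$ could grow linearly in $k$ and no step-independent condition like \eqref{ieq:design_t1} could hold. We handle it with the exact cancellation identity above, which shows that the $\UpdateCG$ update leaves $E$ invariant. The remaining bounds are routine applications of the triangle inequality.
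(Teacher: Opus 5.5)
Your proof is correct and follows the paper's argument: you write $e_{\mathsf{total}}(k)$ as the sum of the $n_p$ multiplication errors $\check F_j e_{x_j}(k) + E_{F_j}(k)\BitDecomp(c_{x_j}(k))$, bound each by $(\|\Ecd_\Delta(F)\|_\infty + n_G\nu)\kappa\sigma$, and apply the rounding criterion, exactly as the paper does. Your extra induction showing that $\UpdateCG$ leaves the error row $E$ unchanged (so $C_F(k)$ really is an $\EncG$-form ciphertext under $\sk(k)$ with $|E_i(k)|\le\kappa\sigma$) makes explicit a step the paper uses only tacitly when it applies Lemma~\ref{prop:dyn_mult} to the updated $C_F(k)$.
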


\begin{proof}
From \textit{Lemmas~\ref{prop:dyn_add}} and \ref{prop:dyn_mult}, the total error in the encrypted controller is given by $e_{\mathsf{total}} = \sum_{j=1}^{n_p} e_{j_{\mathsf{mult}}}$, where $e_{j_{\mathsf{mult}}}$ denotes the error introduced by the $j$-th homomorphic multiplication.
From \textit{Lemma~\ref{prop:dyn_mult}}, each homomorphic multiplication introduces an error $e_{j_\mathsf{mult}} = \Ecd_{\Delta}(F_j) e + E \BitDecomp(c_{x_j}(k))$.
The first term is bounded as $|\Ecd_{\Delta}(F_j)e| \le \|\Ecd_{\Delta}(F)\|_\infty |e|\le \|\Ecd_{\Delta}(F)\|_\infty\kappa\sigma,$ since $|e|<\kappa\sigma$.
Moreover, the second term can be bounded as $|E\BitDecomp(c_{x_j}(k))|\le n_G\nu \kappa\sigma $, since $|E_i|\le \kappa\sigma$ and $\BitDecomp(c_{x_j}(k))$ has $n_G$ elements.
Combining these bounds, the multiplication error $|e_{\mathsf{mult}}|$ satisfies $|e_{\mathsf{mult}}|\le (\|\Ecd_{\Delta}(F)\|_\infty+n_G\nu)\kappa\sigma$.
Summing over the $n_p$ multiplications yields the accumulated error, $e_{\mathsf{total}} = \sum_{j=1}^{n_p} e_{j_{\mathsf{mult}}} \leq n_p (\|\Ecd_{\Delta}(F)\|_\infty + n_G\nu) \kappa \sigma$.
From \textit{Lemma~\ref{prop:correctness}}, correct decryption requires $\frac{t}{q} |e_{\mathsf{total}}|<\frac{1}{2}$.
Substituting the above bound yields $\frac{t}{q}|e_{\mathsf{total}}|\leq\frac{t}{q}n_p (\|\Ecd_{\Delta}(F)\|_\infty+n_G\nu)\kappa\sigma <\frac{1}{2}$, which is equivalent to $n_p(\|\Ecd_{\Delta}(F)\|_\infty+n_G\nu)\kappa \sigma<\frac{q}{2 t}$. 
This yields the inequality~\eqref{ieq:design_t1}.
\end{proof}
\textit{Lemma~\ref{thm:designt1}} implies that the total error introduced by homomorphic operations can be completely eliminated, thereby ensuring correct decryption.
Based on this result, the following theorem establishes the equivalence between the encrypted and quantized control systems, as illustrated in Fig.~\ref{fig1}.

\begin{theorem}\label{cor1}
 If the parameters $\Delta$, $t$, and $q$ satisfy~\eqref{ieq:design_t1}, then the encrypted control systems $f_{\Pi_{\mathsf{dyn}}}$ are equivalent to the quantized control systems with the quantized state-feedback gain $\bar{F}$ and state $\bar{x}(k)$. 
 Moreover, under the assumption that quantization errors are negligible, this implies that the encrypted control systems $f_{\Pi_{\mathsf{dyn}}}$ are equivalent to the original systems~\eqref{eq:system}.
\end{theorem}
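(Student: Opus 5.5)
The argument proceeds by induction on the step $k$. The claim to carry through the induction is that, at every $k$, the ciphertext $C_F(k)$ obtained by repeatedly applying $\UpdateCG$ is a valid $\EncG$-type encryption of $\Ecd_{\Delta}(F)$ under the current key $\sk(k)$, with the \emph{same} error row vector $E$ as at $k=0$. Likewise, $c_x(k)$ is a fresh $\Encv$ ciphertext under $\sk(k)$ by construction.

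\textbf{Invariance of the key/ciphertext pair.} For the base case, $C_F(0)=\EncG(\Ecd_{\Delta}(F),\sk(0))$ holds by definition. For the inductive step, assume $C_F(k)=\Ecd_\Delta(F)G+\begin{bmatrix}-\sk(k)^\top A+E\\ A\end{bmatrix}$. Applying $\UpdateCG$ and using $\sk(k+1)=\sk(k)+s(k)\bmod q$ from $\UpdateKey$, the top block becomes
\begin{align*}
-\sk(k)^\top A+E-s(k)^\top A=-\sk(k+1)^\top A+E \bmod q .
\end{align*}
So $C_F(k+1)$ has exactly the form produced by $\EncG(\Ecd_\Delta(F),\sk(k+1))$, with the same $A$ and $E$, and hence the same bound $|E_i|\le\kappa\sigma$. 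This computation also shows that $\tau(k+1)^\top C_F(k+1)=\Ecd_\Delta(F)\tau(k+1)^\top G+E$, so Lemma~\ref{prop:dyn_mult} applies at every step with errors that do not grow over time.

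\textbf{Correct decryption at each step.} Fix $k$. By Lemmas~\ref{prop:dyn_add} and \ref{prop:dyn_mult}, decrypting $c_u(k)$ with $\sk(k)$ gives $\bigl\lceil\sum_j \Ecd_\Delta(F_j)\Ecd_\Delta(x_j(k))+\tfrac{t}{q}e_{\mathsf{total}}(k)\bigr\rfloor$. Under \eqref{ieq:design_t1}, Lemma~\ref{thm:designt1} gives $\tfrac{t}{q}|e_{\mathsf{total}}(k)|<\tfrac12$, because its bound depends only on $\|\Ecd_\Delta(F)\|_\infty$, $n_G\nu$, and $\kappa\sigma$, none of which depend on $k$. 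The decryption therefore equals $\check F\check x(k)$, provided this integer lies in $\mathbb{Z}_t$. The paper already imposes this no-overflow requirement through the choice of $\Delta$, and the proof should state it explicitly. Decoding with $\Dcd_{\Delta^2}$ then yields $\bar F\bar x(k)$, which is exactly the quantized controller output $\bar u(k)$. The plant is fed the same input as in the quantized loop, so by induction on $k$ the state trajectories of the encrypted and quantized closed loops coincide.

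\textbf{Equivalence to the original system and main obstacle.} For the second claim, I would invoke the standing assumption that quantization errors are negligible, so that $\bar F\approx F$ and $\bar x(k)\approx x(k)$. The quantized closed loop then coincides with \eqref{eq:system}. The main obstacle is the first step: verifying that the update maps keep the error term fixed rather than accumulating it, so that a single, time-independent condition \eqref{ieq:design_t1} suffices for all $k$. Everything else is a direct application of the earlier lemmas together with the decoder bookkeeping for the scale factor $\Delta^2$.
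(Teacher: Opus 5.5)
Your proposal is correct and follows the same route as the paper's proof: apply Lemma~\ref{thm:designt1} to get correct decryption of $c_u(k)$ at every step, then decode to obtain $\bar F\bar x(k)$. The pieces you add are details the paper's two-line proof leaves implicit, not a different argument: the induction showing that $\UpdateCG$ keeps $C_F(k)$ in $\EncG$ form under $\sk(k)$ with the same error $E$, the explicit no-overflow condition, and decoding with $\Dcd_{\Delta^2}$ (the paper writes $\Dcd_{\Delta}$, which is off by a factor of $\Delta$).
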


\begin{proof}
Let $\bar{u}(k)=\Dcd_{\Delta}(\Dec(c_u(k),\sk(k)))$ denote the decrypted control input.
From \textit{Lemma~\ref{thm:designt1}}, correct decryption is ensured, and thus the above expression becomes
$\bar{u}(k)=  \Dcd_{\Delta}\!\left(\sum_{j=1}^{n_p}\Ecd_{\Delta}(F_{j})  \Ecd_{\Delta}(x_j(k))\right) = \bar{F}\bar{x}(k)$.
Therefore, the decrypted control input is equivalent to the output of the quantized state-feedback controller.
Moreover, under the assumption, it holds that $\bar{F}=F$ and $\bar{x}(k)=x(k)$, which results in $\bar{u}(k)=u(k), \forall k\in\mathbb{Z}^+$.
\end{proof}

Consequently, \textit{Lemma~\ref{thm:designt1}} and \textit{Theorem~\ref{cor1}} provide the design procedure for state-feedback control systems~\eqref{eq:system} as follows: 
i) Set $\gamma_c$, $T_c$, $\kappa$, and $\Xi$;
ii) Compute $N^*(\gamma_c) = \left\lfloor \tfrac{n_p}{\gamma_c \, \mathrm{tr}(\Psi)} \right\rfloor+2$ and $\lambda_Q = \left\lfloor\log_2\left(\tfrac{\Xi\,T_c}{N^*(\gamma_c)} \right)\right\rfloor + 1$ from~\eqref{def:qunatum_decipher};
iii) Search the parameters $(q,\sigma,n,\nu,d,n_G)$ according to $\lambda_Q$ using the LE;
iv) Choose $(\Delta, t)$ such that inequality~\eqref{ieq:design_t1} is satisfied.

\begin{figure}[tb]
      \centering
  \subfloat[Encrypted control system]{
\includegraphics[width=0.45\linewidth,keepaspectratio]{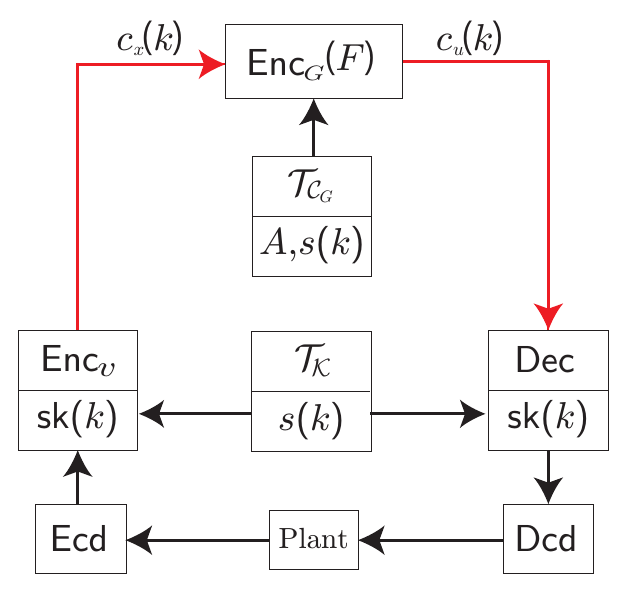}
  }
    \hfill
  \subfloat[Quantized control system]{
\includegraphics[width=0.43\linewidth,keepaspectratio]{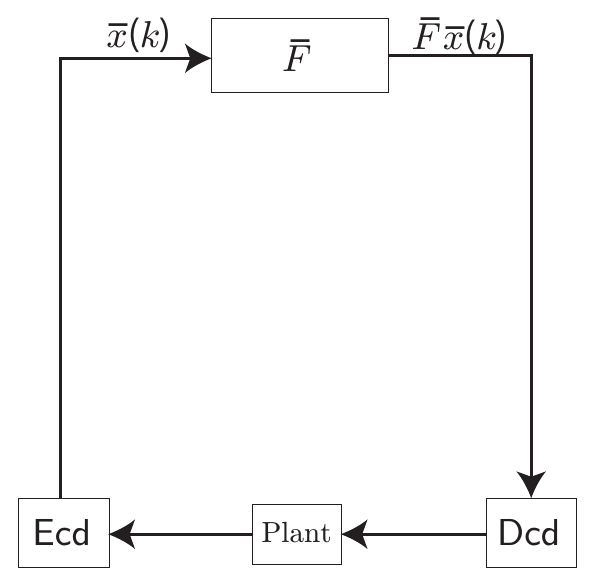}
}
 \caption{Configurations of encrypted and quantized control systems.}
 \label{fig1}
 \end{figure}

\section{Numerical Example}\label{Sec:exp}
This section demonstrates the design of a secure encrypted control system through numerical simulations.
The computations were performed on a MacBook Air equipped with an Apple M4 chip and 24\,GB of memory, running macOS Tahoe 26.2.
The encryption scheme was implemented using the C++ NTL library.

Consider the system~\eqref{eq:system} with
\begin{align*}
    A_p = \begin{bmatrix} 1 & 0.5 \\ 0 & -1.2 \end{bmatrix},\  
    B_p = \begin{bmatrix} 0 \\ 1 \end{bmatrix}, \ 
    F = \begin{bmatrix}-0.5198 & 0.6158 \end{bmatrix},
\end{align*}
where $\sigma_p = 0.01$.
Following the design procedure, 
(i) the parameters $\gamma_c$, $T_c$, $\kappa$, and $\Xi$ were set to $10^{-5}$, $ 3.154 \times 10^{8}$~s (10 years), $6$, and $1.81 \times 10^{18}$~CLOPS\footnote{This value corresponds to the performance of the El Capitan supercomputer https://www.top500.org/lists/top500/2025/11/}, respectively;
ii) this results in $N^* = 33198$ and $\lambda_Q = 74$ bits;
iii) using~\cite{LWE_estimator},  $(n,q,\sigma,\nu,d,n_G) = (5,2^{50},2.0,2,50,300)$, where the number of ciphertexts was set to $\mathsf{m}=5$; and
iv) the resulting parameters $(\Delta,t)$ were $(2^{16},2^{29})$.
As a result, the quantum sample-deciphering time satisfies $T_Q(33198,74) = 3.465 \times 10^8 > T_c$, which indicates that the designed control system is secure against the system identification attack.

The output of the encrypted controller is compared with that of the quantized controller to validate \textit{Theorem~\ref{cor1}}.
Fig.~\ref{fig:result} shows the difference between the outputs of the encrypted and quantized controllers. 
Fig.~\ref{fig:result}(a) corresponds to the parameters obtained by the presented procedure, while Fig.~\ref{fig:result}(b) corresponds to the parameters $(\Delta,t)=(2^{26},2^{50})$, which do not satisfy \textit{Lemma~\ref{thm:designt1}}.
These results indicate that the parameters $(\gamma,t)$ designed using the proposed procedure yield outputs that coincide with those of the quantized controller, which supports \textit{Theorem~\ref{cor1}}. 
In contrast, when $t=q$, a difference on the order of $10^{-7}$ to $10^{-8}$ is observed, indicating that the error introduced by the homomorphic operations remains.
This example confirms that encrypted control systems designed using the presented procedure coincide with the corresponding quantized control systems, in which no LWE-induced errors occur during operations.

\begin{figure}[tb]
      \centering
  \subfloat[$(\Delta,t) = (2^{16},2^{29})$]{
\includegraphics[width=0.47\linewidth,keepaspectratio]{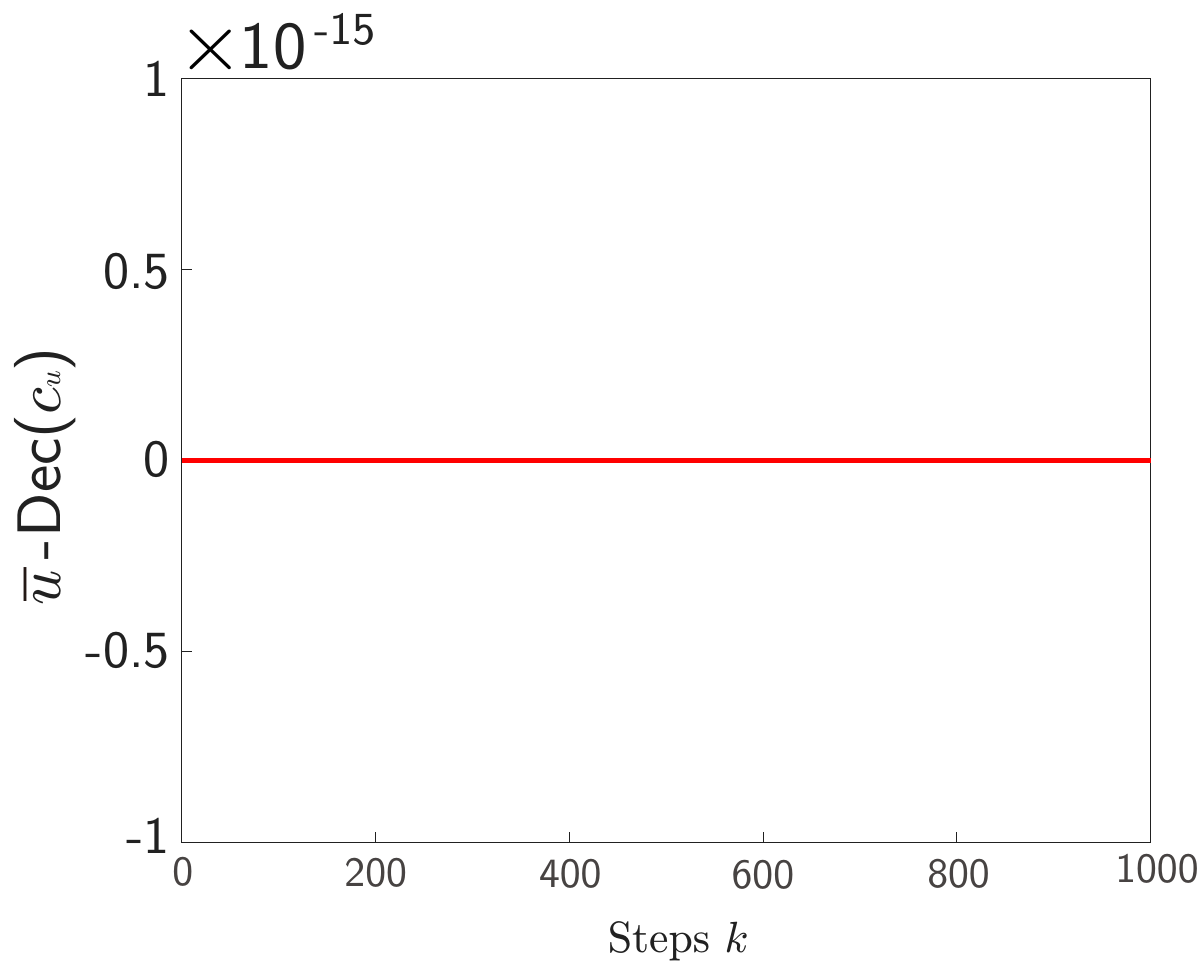}
  }
    \hfill
  \subfloat[$(\Delta,t) = (2^{26},2^{50})$]{
\includegraphics[width=0.47\linewidth,keepaspectratio]{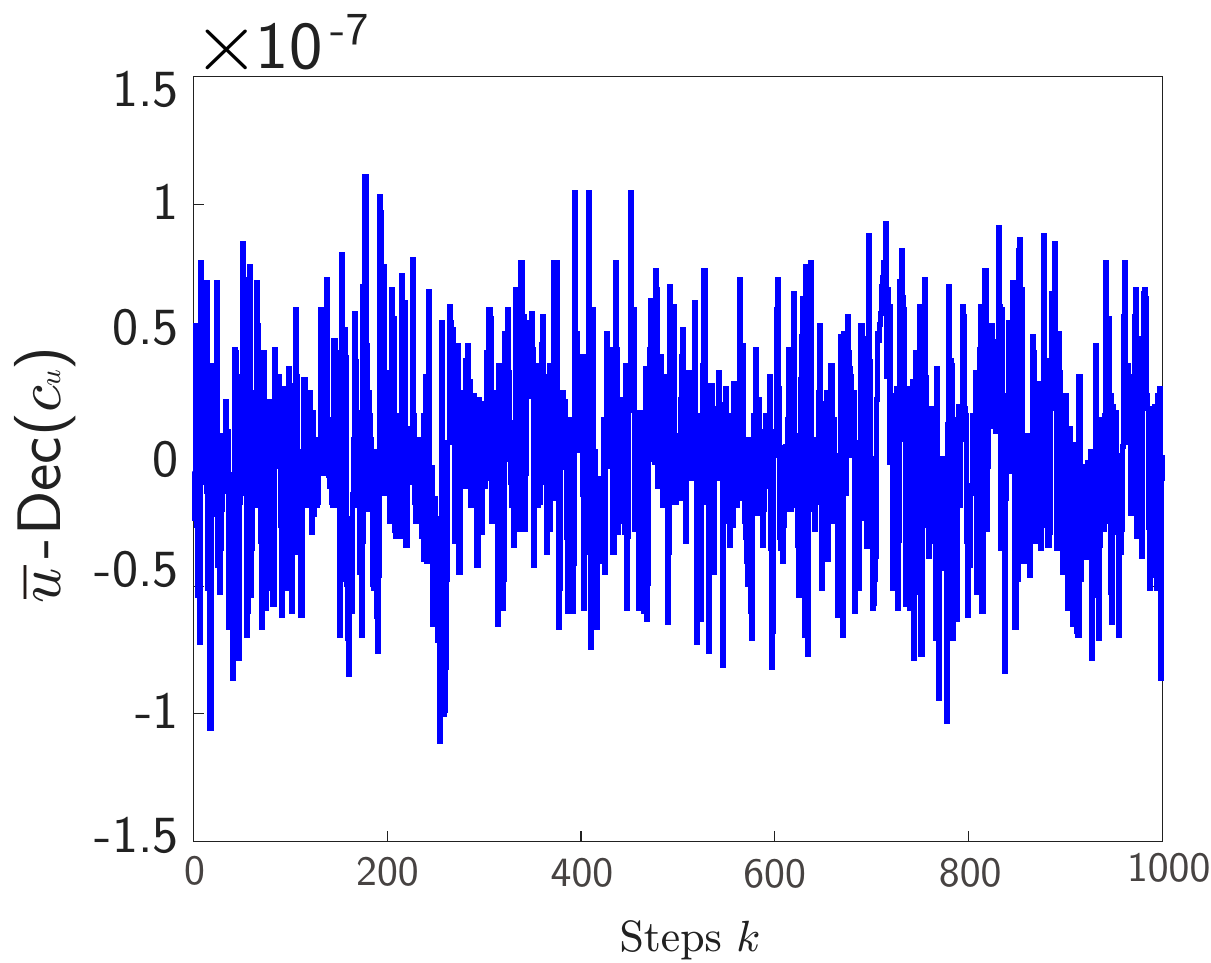}
  }
 \caption{The difference between the encrypted controller's output and that of the quantized controller.}
 \label{fig:result}
 \end{figure}

\section{Conclusion}\label{Sec:con}
This study proposed post-quantum encrypted control systems based on a dynamic-key Learning with Errors (LWE) encryption scheme.
We developed update maps that simultaneously update the private key and ciphertexts within LWE encryption and constructed encrypted state-feedback control systems based on the proposed scheme.
The growth of errors induced by homomorphic operations was analyzed, and sufficient conditions on the parameters that guarantee correct decryption at each step were derived.
Based on this analysis, we presented a design procedure for secure encrypted control systems using security metrics such as sample-identifying complexity and deciphering time.
A numerical example demonstrated that the secure encrypted controller coincides with the corresponding quantized control system, with no errors induced by the LWE-based encryption.

Future work includes extending the proposed framework to general discrete-time linear controllers beyond state-feedback control.
In addition, applying the dynamic-key concept to ring-LWE-based post-quantum cryptographic schemes such as BFV and CKKS is an important direction for further research.

\section*{ACKNOWLEDGMENTS}
The authors are grateful to Dr. Kenji Yasunaga, Associate Professor, Department of Mathematical and Computing Science, School of Computing, Institute of Science Tokyo, for his helpful and valuable comments.

\bibliographystyle{IEEEtran}
\bibliography{references}
\end{document}